\documentclass[11pt]{amsart}



\usepackage{amsmath,amsthm,amsfonts,amscd}
\usepackage{amssymb}
\usepackage{bm}
\usepackage{mathrsfs}
\usepackage[pdftex]{graphicx}
\usepackage[mathcal]{eucal}
\usepackage{verbatim}
\usepackage{url}
\usepackage{color}
\usepackage{supertabular}
\usepackage{multirow}
\usepackage{algorithmic}



\newtheorem{thm}{Theorem}
\newtheorem{claim}{Claim}

\newtheorem{lemma}[thm]{Lemma}

\newtheorem{definition}[thm]{Definition}

\numberwithin{equation}{section}



\newenvironment{reftheorem}[1]{\medskip\parindent 0pt{\bf Theorem \ref{#1}}\em }{\vspace{1em}}


\DeclareMathAlphabet{\mathsfsl}{OT1}{cmss}{m}{sl}



%
	{\makebox{\phantom{}} \\ %
		\textsc{Input:}\begin{itemize}}%
	{\end{itemize}}

	{\textsc{Output:}\begin{itemize}}%
	{\end{itemize}}

	{\textsc{Procedure:}\begin{enumerate}}%
	{\end{enumerate}}


\renewcommand{\phi}{\varphi}






\newcommand{\Expect}{\operatorname{\mathbb{E}}}


\newcommand{\vct}[1]{\bm{#1}}
\newcommand{\mtx}[1]{\bm{#1}}


\newcommand{\norm}[1]{\left\Vert {#1} \right\Vert}


\newcommand{\pnorm}[2]{\norm{#2}_{#1}}

\newcommand{\nerr}[1]{\left\| #1 \right\|}


\newcommand\rds{{\oplus_{{\rm r}}}}
\newcommand\sdp{{\ltimes_{{\rm r}}}}
\newcommand\hp{{\odot}}



\newcommand{\signal}{{\mathbf x}}





\evensidemargin=0in
\oddsidemargin=0in
\textwidth=6.5in
\topmargin=-0.33in
\headheight=0.25in
\textheight=9in

\title{Approximate Sparse Recovery: Optimizing Time and Measurements}
\author[Gilbert, Li, Porat, Strauss]{A.\ C.\ Gilbert, Y.\ Li, E.\ Porat, and M.\ J.\ Strauss}

\thanks{Gilbert is with the Department of Mathematics, The University of 
  Michigan at Ann Arbor.  E-mail: \url{annacg}\url{@umich.edu}.  Li is with the
  Department of Electrical Engineering and Computer Science, The University of 
  Michigan at Ann Arbor.  E-mail: \url{leeyi@umich.edu}.  Porat is with the 
  Department of Computer Science, Bar-Ilan University.  E-mail: 
  \url{porately@cs.biu.ac.il}.  Strauss is
  with the Department of Mathematics and the Department of Electrical
  Engineering and Computer Science, The University of Michigan at Ann
  Arbor.  E-mail: \url{martinjs@umich.edu}.
}

\begin{document}
\maketitle

\begin{abstract}
An {\em approximate sparse recovery} system consists of
parameters $k,N$, an $m$-by-$N$ {\em measurement matrix},
$\mtx{\Phi}$, and a decoding algorithm, $\mathcal{D}$.  Given a vector, $\signal$,
the system approximates $\signal$ by 
$\widehat \signal=\mathcal{D}(\mtx{\Phi} \signal)$,
which must satisfy
$\nerr{\widehat \signal -   \signal}_2\le C \nerr{\signal - \signal_k}_2$, where
$\signal_k$ denotes the optimal $k$-term approximation to $\signal$.  For each
vector $\signal$, the system must succeed with probability at least 3/4.  Among the goals in designing such systems are minimizing the number $m$ of measurements and the runtime of the
decoding algorithm, $\mathcal{D}$.

In this paper, we give a system with $m=O(k \log(N/k))$
measurements---matching a lower bound, up to a constant factor---and
decoding time $O(k\log^c N)$, matching a lower bound up
to $\log(N)$ factors.
We also consider the encode time ({\em i.e.}, the time to multiply
$\mtx{\Phi}$ by $x$), the time to update measurements ({\em i.e.}, the time to
multiply $\mtx{\Phi}$ by a 1-sparse $x$), and the robustness and stability of the algorithm (adding noise before and after the measurements).  Our
encode and update times are optimal up to $\log(N)$ factors.
The columns of $\mtx{\Phi}$ have at most $O(\log^2(k)\log(N/k))$ non-zeros, each of which can be found in constant time.  If $\signal$ is an exact $k$-sparse signal and $\nu_1$ and $\nu_2$ are arbitrary vectors (regarded as noise), then, setting $\widehat \signal = \mathcal{D}(\Phi (\signal+\nu_1)+\nu_2)$, we get
\[
\nerr{\widehat \signal - \signal}_2
    \le  2 \nerr{\nu_1}_2 + 
            \log(k)\frac{\nerr{\nu_2}_2}{\nerr{\Phi}_{2\leadsto 2}},
\]
where ${\nerr{\Phi}_{2\leadsto 2}}$ is a natural scaling factor that
makes our result comparable with previous results.  (The $\log(k)$
factor above, improvable to $\log^{1/2+o(1)} k$, makes our result
(slightly) suboptimal when $\nu_2\ne 0$.)  We also extend our recovery
system to an FPRAS.
\end{abstract}

\allowdisplaybreaks

\section{Introduction}

Tracking heavy hitters in high-volume, high-speed data streams~\cite{CCF}, monitoring changes in data streams~\cite{CM03:Whats-Hot}, designing pooling schemes for biological tests~\cite{Erlich09:DNA-Sudoku} (e.g., high throughput sequencing, testing for genetic markers), localizing sources in sensor networks~\cite{Brady05:Fiber-graycode,Brady06:Optics-grouptesting} are all quite different technological challenges, yet they can all be expressed in the same mathematical formulation.  We have a signal $\signal$ of length $N$ that is sparse or highly compressible; i.e., it consists of $k$ significant entries (``heavy hitters'') which we denote by $\signal_k$ while the rest of the entries are essentially negligible.  We wish to acquire a small amount information (commensurate with the sparsity) about this signal in a linear, non-adaptive fashion and then use that information to quickly recover the significant entries.  In a data stream setting, our signal is the distribution of items seen, while in biological group testing, the signal is proportional to the binding affinity of each drug compound (or the expression level of a gene in a particular organism).  We want to recover the identities and values only of the heavy hitters which we denote by $\signal_k$, as the rest of the signal is not of interest.  Mathematically, we have a signal $\signal$ and an $m$-by-$N$ measurement matrix $\mtx{\Phi}$ with which we acquire measurements $\vct{y} = \mtx{\Phi}\signal$, and, from these measurements $\vct{y}$, we wish to recover $\widehat \signal$, with $O(k)$ entries, such that
\[   \pnorm{2}{\signal - \widehat \signal} \leq 
            C \pnorm{2}{\signal - \signal_k}.
\]
Our goal, which we achieve up to constant or log factors in the
various criteria, is to design the measurement matrix $\mtx{\Phi}$ and
the decoding algorithm in an optimal fashion: (i) we take as few
measurements as possible $m = O(k\log (N/k))$, (ii) the decoding
algorithm runs in {\em sublinear} time $O(k\log(N/k))$, and (iii) the
encoding and update times are optimal $O(N \log(N/k))$ and
$O(k\log(N/k))$, respectively.  In order to achieve this, our
algorithm is a randomized algorithm; i.e., we specify a distribution
on the measurement matrix $\mtx{\Phi}$ and we guarantee that, for each
signal, the algorithm recovers a good approximation with high
probability over the choice of matrix.

In the above applications, it is important both to take as few measurements as possible and to recover the heavy hitters extremely efficiently.  Measurements correspond to physical resources (e.g., memory in data stream monitoring devices, number of screens in biological applications) and reducing the number of necessary measurements is critical these problems.  In addition, these applications require efficient recovery of the heavy hitters---we test many biological compounds at once, we want to quickly identify the positions of entities in a sensor network, and we cannot afford to spend computation time proportional to the size of the distribution in a data stream application.  Furthermore, Do Ba, et al.~\cite{BaIndykPriceWoodruff:2010} give a lower bound on the number of measurements for sparse recovery $\Omega(k\log(N/k))$. There are polynomial time algorithms~\cite{NT08,CRT06:Stable-Signal,RI08} meet this lower bound, both with high probability for each signal and the stronger setting, with high probability for all signals\footnote{albeit with different error guarantees and different column sparsity depending on the error metric.}.  Previous sublinear time algorithms, whether in the ``for each'' model~\cite{CCF,CM06:Combinatorial-Algorithms} or in the ``for all'' model~\cite{GSTV07:HHS}, however, used several additional factors of $\log(N)$ measurements.  We summarize the previous sublinear algorithms in the ``for each'' signal model in Figure~\ref{fig:previouswork}.  The column sparsity denotes how many 1s there are per column of the measurement matrix and determines both the decoding and measurement update time and, for readability, we suppress $O(\cdot)$.  The approximation error signifies the metric we use to evaluate the output; either the $\ell_2$ or $\ell_1$ metric.  In this paper, we focus on the $\ell_2$ metric.
\begin{figure}
{\footnotesize
\begin{tabular}{|c|c|c|c|c|c|c|c|}
\hline
Paper		& No. Measurements &	Encode time &	Column sparsity/ &	Decode time & 	Approx. error\\
 & & & Update time & & \\
\hline 
\cite{Don06:Compressed-Sensing,CRT06:Stable-Signal}
                & $k\log( N/k)$    &  $Nk\log(N/k)$ &   $k\log(N/k)$     &      $\ge N$     &  $\ell_2 \le (1/\sqrt{k}) \ell_1$\\
\cite{CCF,CM06:Combinatorial-Algorithms}  &	$k \log^c N$ &	$N \log^c N$ &	$\log^c N$  &	$k \log^c N$ & 	$\ell_2 \le C \ell_2$\\
\cite{CM03b} &	$k \log^c N$ &	$N \log^c N$ &	$\log^c N$ &	$k \log^c N$ & 	$\ell_1 \le C \ell_1$\\
\hline 
This paper	 &	$k \log(N/k)$ & 	$N \log^c N$ & 	$\log^c N$   & $k\log^c N$ & $\ell_2 \le C \ell_2$\\
\hline
\end{tabular}
\caption{\footnotesize Summary of the best previous results and the result obtained in this paper. }
\label{fig:previouswork}
}
\end{figure}

We give a joint distribution over measurement matrices and sublinear time recovery algorithms that meet this lower bound (up to constant factors) in terms of the number of measurements and are within $\log(k)$ factors of optimal in the running time and the sparsity of the measurement matrix.  
\begin{thm}
	\label{thm:plain}
	There is a joint distribution on matrices and algorithms, with
    suitable instantiations of anonymous constant factors, such that, 
    given measurements $\mtx{\Phi}\signal = \vct{y}$, the algorithm returns
      $\widehat x$ and approximation error
\[ \pnorm{2}{\signal - \widehat \signal} \leq
    2 \pnorm{2}{\nu_1}
\]
with probability $3/4$.  The algorithm runs in time $O(k \log^c(N))$ and $\mtx{\Phi}$ has $O(k \log(N/k))$ rows.
\end{thm}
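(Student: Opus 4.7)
The plan is to prove Theorem~\ref{thm:plain} by an iterative scheme in which a \emph{weak} identification-and-estimation subroutine is applied repeatedly to progressively smaller residuals. I would construct $\mathbf{\Phi}$ as a vertical stacking of $T = O(\log k)$ independent sub-matrices $\mathbf{\Phi}^{(1)}, \ldots, \mathbf{\Phi}^{(T)}$, where the $j$-th sub-matrix is designed to recover a $(1-\eta)^j k$-sparse approximation to the current residual using roughly $c (1-\eta)^j k \log(N/k)$ measurements. Geometric decay in the per-stage sparsity budget makes the total measurement count telescope to $O(k \log(N/k))$, matching the Do~Ba--Indyk--Price--Woodruff lower bound.

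For the weak subroutine I would follow the hash-and-identify paradigm. Hash the $N$ coordinates into $B = \Theta(k)$ buckets by a pairwise-independent hash function; in each bucket, use a structured identifier block --- built, say, from a bit-tester matrix or a good error-correcting code --- to recover the index of the heaviest entry in that bucket whenever it is isolated (no other heavy hitter lands in the same bucket and the $\ell_2$ tail contribution to that bucket is small relative to the head). A Chebyshev / Markov argument shows isolation holds for a constant fraction of the heavies. The value at each identified coordinate is then estimated by a median-of-estimators in the \algname{CountSketch} style. One invocation of the weak subroutine thus recovers a constant fraction of the remaining heavies, and the recovered entries are subtracted off before proceeding to the next stage.

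The bookkeeping across iterations is where the constant $C = 2$ is won. Each stage that recovers a $\rho$-fraction of the remaining heavies leaves a residual whose $\ell_2$ tail is bounded by $(1 + O(\eta))$ times the previous tail plus a small multiple of $\nerr{\nu_1}_2$; with $\rho$ close to $1$, $\eta$ small, and per-stage failure probability small enough to union-bound over $T$ stages, summing the geometric series yields $\nerr{\signal - \widehat\signal}_2 \le 2 \nerr{\nu_1}_2$ with probability at least $3/4$. Runtime accounting is immediate: the $j$-th stage touches only $O((1-\eta)^j k \cdot \polylog N)$ buckets, so the total decoding time is $O(k \log^c N)$ as claimed.

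The main obstacle, I expect, is simultaneously achieving (i) the tight $O(k \log(N/k))$ measurement count per stage, (ii) sublinear decoding time, and (iii) the small leading constant $C = 2$. Requirement (i) forces the identifier block for each bucket to use only $O(\log(N/B)) = O(\log(N/k))$ rows, while (ii) forces that identifier to support decoding in $\polylog N$ time rather than in $\poly(N)$ time. Together these essentially demand a list-recoverable code with both near-optimal rate and sublinear-time decoding, and verifying that such an ingredient slots into the hash-and-identify template without inflating the column sparsity beyond $O(\log^2(k) \log(N/k))$ or degrading the constants in the error analysis is, I anticipate, where the bulk of the technical work lies.
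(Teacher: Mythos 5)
Your overall architecture matches the paper's quite closely---$O(\log k)$ stages with geometrically decaying per-stage budgets, pairwise-independent hashing into buckets so that only the $O(\log(N/k))$-bit local index needs encoding, code-based identification, and \textsc{CountSketch}-style median estimation---but the error and failure bookkeeping as you state it has a genuine gap that blocks the constant $2$. If each stage multiplies the residual tail by $(1+O(\eta))$ for a fixed constant $\eta$, compounding over $T=O(\log k)$ stages gives $(1+O(\eta))^{T}=k^{O(\eta)}$, not a constant; and shrinking $\eta$ to $O(1/\log k)$ to kill the compounding destroys the geometric decay of your per-stage budgets $c(1-\eta)^j k\log(N/k)$, inflating the total measurement count to $O(k\log k\log(N/k))$ and losing the matching of the lower bound. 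Similarly, a constant per-stage failure probability does not union-bound over $O(\log k)$ stages. The paper resolves both difficulties by a \emph{compound} invariant in which two decay rates are deliberately mismatched: the number of surviving heavy hitters halves each round ($k/2^j$), while the number of buckets decays only as $kc^j$ with $c=2/3$ and the whole stage is repeated $O(j)$ times, so the measurements available per remaining heavy hitter grow like $(4/3)^j$. This oversampling purchases a per-stage additive error of $\frac14(3/4)^j$ (maintaining $\pnorm{2}{\nu_1^{(j)}}\le 2-(3/4)^j$, whence the final constant $2$) and a failure probability that decreases with $j$, while the stage cost $\Theta\bigl(jk(2/3)^j\log(N/k)+jk(8/9)^j\bigr)$ still sums geometrically to $O(k\log(N/k))$. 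Your scheme, in which the sparsity budget and measurement budget decay at the same rate $(1-\eta)^j$, gives only constant oversampling per heavy at every stage, so the tolerance cannot tighten geometrically and the series you propose to sum is not in fact geometric with the constants you need.

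Separately, the obstacle you anticipate---a list-recoverable code with near-optimal rate and sublinear-time decoding---is heavier machinery than the paper requires. Because isolation places exactly one heavy hitter in a bucket with probability $\Omega(1)$, no list decoding is needed: the paper splits the $O(\log(N/k))$-bit index into blocks of $\Theta(\log\log N)$ bits, encodes each block with an arbitrary constant-rate, constant-relative-distance (possibly nonlinear) code whose codewords are listed explicitly, converts measurements to bits by thresholding against a median taken over a parallel run of $\Theta(\log\log N)$ ones, applies a Chernoff bound per block (failure probability $1/\log N$) and a union bound over the $O(\log N)$ blocks, and decodes each block by brute-force nearest neighbor in $\log^{O(1)}N$ time---well within the claimed runtime. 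Note also that the bit-tester alternative you float would fail outright in this regime, exactly as the paper observes: each individual measurement succeeds only with probability $\Omega(1)$, so the trivial identity encoding lacks the redundancy to survive constant-probability measurement failures.
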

Furthermore, our algorithm is a fully polynomial randomized approximation scheme.
\begin{thm}
	\label{thm:FPRAS}
    There is a joint distribution on matrices and algorithms, with
    suitable instantiations of anonymous constant factors (that
    may depend on $\epsilon$, such that, given measurements
      $\mtx{\Phi}\signal = \vct{y}$, the algorithm returns
      $\widehat x$ and approximation error
\[ \pnorm{2}{\signal - \widehat \signal} \leq
    (1+\epsilon)\pnorm{2}{\nu_1}
\]
with probability $3/4$.  The algorithm runs in time $O((k/\epsilon)
    \log^c(N))$ and $\mtx{\Phi}$ has $O((k/\epsilon) \log(N/k))$ rows.
\end{thm}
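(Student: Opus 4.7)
My plan is a two-stage reduction to Theorem~\ref{thm:plain}. First, I instantiate that theorem at an inflated sparsity parameter $k' := Ck/\epsilon$ for a suitable absolute constant $C$. Since all resource bounds in Theorem~\ref{thm:plain} are linear in the sparsity parameter, this already matches the target $O((k/\epsilon)\log(N/k))$ measurements and $O((k/\epsilon)\log^c N)$ decoding time, and, with probability $3/4$, returns a vector $\widehat{\signal}^{(1)}$ supported on $O(k/\epsilon)$ coordinates with
\[
    \nerr{\signal - \widehat{\signal}^{(1)}}_2 \;\le\; 2\,\nerr{\signal - \signal_{k'}}_2 \;\le\; 2\,\nerr{\signal - \signal_k}_2.
\]
I treat this stage as an \emph{identification} step only, using $\widehat{\signal}^{(1)}$ to produce a candidate support $T := \supp{\widehat{\signal}^{(1)}}$ of size $O(k/\epsilon)$ that, with high probability, contains essentially all of the top-$k$ coordinates of $\signal$.

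Second, I append a Count-Sketch--style refinement: $O((k/\epsilon)\log(N/k))$ additional measurement rows whose only job is to re-estimate $\signal$ on $T$. Because $|T| = O(k/\epsilon)$ and the off-$T$ mass of $\signal$ is at most $O(\nerr{\signal-\signal_k}_2)$, hashing $T$ into $O(k/\epsilon^2)$ buckets with $O(\log k)$ independent repetitions and taking medians gives per-coordinate estimates $\tilde{\signal}_i$ with uniform error $O(\epsilon\,\nerr{\signal - \signal_k}_2/\sqrt{k})$ for $i \in T$. The output $\widehat{\signal}$ is the top-$k$ coordinates of $\tilde{\signal}$. Decomposing $\nerr{\signal - \widehat{\signal}}_2^2 = \sum_{i \in S}(\signal_i - \tilde{\signal}_i)^2 + \nerr{\signal|_{S^c}}_2^2$ with $S := \supp{\widehat{\signal}}$, the first summand is at most $k \cdot O(\epsilon^2\nerr{\signal-\signal_k}_2^2/k) = O(\epsilon^2)\,\nerr{\signal-\signal_k}_2^2$, and the second is $(1 + O(\epsilon))\,\nerr{\signal-\signal_k}_2^2$ provided that $T$ covers the top-$k$ of $\signal$ up to missed mass of order $O(\epsilon)\nerr{\signal-\signal_k}_2^2$. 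Together these yield $\nerr{\signal - \widehat{\signal}}_2 \le (1+\epsilon)\,\nerr{\signal - \signal_k}_2$.

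The main obstacle is discharging the ``up to small missed mass'' caveat. A purely black-box $\ell_2$ bound on $\signal - \widehat{\signal}^{(1)}$ is too coarse: it is compatible with $T$ losing a single moderately heavy coordinate of squared weight near $\nerr{\signal-\signal_k}_2^2/k$, which alone would break the $(1+\epsilon)$ guarantee. I would therefore open up the construction behind Theorem~\ref{thm:plain} and invoke its internal weak-identification guarantee, namely that every coordinate whose squared weight exceeds the natural noise floor $\nerr{\signal - \signal_{k'}}_2^2/k'$ is reported in $T$ with high probability. Combined with the $k'/k = \Theta(1/\epsilon)$ slack, this forces any missed top-$k$ coordinate to have squared weight at most $O(\epsilon/k)\,\nerr{\signal-\signal_k}_2^2$, so that the total missed squared mass over at most $k$ such coordinates is $O(\epsilon)\,\nerr{\signal-\signal_k}_2^2$, as required. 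Once this ``missed mass'' accounting is in place, the estimation and truncation analysis outlined above is standard and closes the proof.
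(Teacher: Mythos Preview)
Your two-stage identify-then-refine plan is different from what the paper does. The paper does not treat Theorem~\ref{thm:plain} as a black box at an inflated sparsity and then bolt on a separate Count-Sketch round; rather, it obtains Theorem~\ref{thm:FPRAS} (and more generally Theorem~\ref{thm:robustFPRAS}) by re-running the \emph{same} iterative scheme with $k$ replaced by $k/\epsilon$ at the appropriate internal places. The point is that the iterative loop already drives the residual ``noise budget'' down along a geometric series, so tightening the per-iteration tolerances by a factor $\epsilon$ costs only a multiplicative $1/\epsilon$ in the number of buckets per iteration, and the total still telescopes to $O((k/\epsilon)\log(N/k))$.

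There is a real gap in your proposal, and it is in Stage~2. You assert $O((k/\epsilon)\log(N/k))$ refinement rows, but then implement the refinement with $O(k/\epsilon^2)$ buckets and $O(\log k)$ repetitions, i.e.\ $\Theta((k/\epsilon^2)\log k)$ rows. These two quantities do not agree in general, and the $1/\epsilon^2$ is not an artifact of your write-up: a \emph{single} Count-Sketch pass that must both select the top-$k$ out of $|T|=\Theta(k/\epsilon)$ candidates and achieve $(1+\epsilon)$ final $\ell_2$ error genuinely requires $\Theta(k/\epsilon^2)$ buckets. With only $B=\Theta(k/\epsilon)$ buckets the per-coordinate error is $\Theta(\sqrt{\epsilon}\,\nerr{\nu_1}_2/\sqrt{k})$, and the standard swap argument for $\nerr{\signal|_{S^c}}_2^2$ then picks up a cross term of order $\sqrt{\epsilon}\,\nerr{\nu_1}_2^2$, giving only a $(1+O(\sqrt{\epsilon}))$ guarantee. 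Getting the linear $1/\epsilon$ dependence in the row count is precisely what the paper's multiscale loop buys: each iteration refines against a residual whose sparse part has already halved, so the estimation cost per iteration shrinks geometrically even as the accuracy requirement tightens geometrically. Your one-shot refinement has no such amortization, so as written it overshoots the measurement budget of Theorem~\ref{thm:FPRAS}.
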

Finally, our result is robust to corruption of the measurements by an
arbitrary noise vector $\nu_2$, which is an important feature for such
applications as high throughput screening and other physical
measurement systems.  (It is less critical for digital measurement
systems that monitor data streams in which measurement corruption is
less likely.)  When $\nu_2\ne 0$, our error dependence is on $\nu_2$
is suboptimal by the factor $\log(k)$ (improvable to $\log^{1/2+o(1)}
k)$.  Equivalently, we can use $\log(k)$ times more measurements to
restore optimality.
\begin{thm}
	\label{thm:robustFPRAS}
	There is a joint distribution on matrices and algorithms, with
    suitable instantiations of anonymous constant factors (that
    may depend on $\epsilon$), such that, given measurements
      $\mtx{\Phi}\signal + \nu_2 = \vct{y} + \nu_2$, the algorithm returns
      $\widehat x$ and approximation error
\[ \pnorm{2}{\signal - \widehat \signal} \leq
    (1+\epsilon)\pnorm{2}{\nu_1} + 
            \epsilon\log(k)\frac{\nerr{\nu_2}_2}{\nerr{\Phi}_{2\leadsto 2}}
\]
with probability $3/4$.  The algorithm runs in time $O((k/\epsilon)
    \log^c(N))$ and $\mtx{\Phi}$ has $O(k/\epsilon \log(N/k))$ rows.
\end{thm}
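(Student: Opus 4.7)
The plan is to reduce Theorem~\ref{thm:robustFPRAS} to Theorem~\ref{thm:FPRAS} by absorbing the measurement noise $\nu_2$ into an effective signal-side perturbation and re-running the iterative decoder while tracking the extra error this perturbation introduces. Writing the input as $\vct{y} = \mtx{\Phi}(\signal + \nu_1) + \nu_2$, the natural scaling is to think of $\nu_2$ as having been produced by a fictitious signal-side error of magnitude roughly $\nerr{\nu_2}_2/\nerr{\mtx{\Phi}}_{2\leadsto 2}$; this is what the row norms of $\mtx{\Phi}$ dictate (rows are bounded in $\ell_2$ by $\nerr{\mtx{\Phi}}_{2\leadsto 2}$ after row-count normalization) and it is exactly the scaling that appears in the conclusion.

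First I would open up the decoder from Theorem~\ref{thm:FPRAS}: it proceeds through $O(\log k)$ weight classes (or iterative rounds), each using its own batch of measurements to \emph{identify} a list of candidate heavy-hitter locations and then \emph{estimate} their values. I would re-analyze both subroutines under additive perturbation of the measurements by the relevant slice of $\nu_2$. For identification, the tests compare linear combinations of entries of $\vct{y}$ against thresholds; thanks to the random hashing inside $\mtx{\Phi}$, each such test sees only $O(\nerr{\nu_2}_2/\nerr{\mtx{\Phi}}_{2\leadsto 2})$ extra noise on average, and a Bernstein-type concentration bound (using the near-independence of hash buckets) gives that this holds simultaneously at all tested locations. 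For estimation, a parallel calculation shows that the median-of-means style estimators accumulate error of the same order per round.

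Next, I would aggregate the per-round contributions over the $O(\log k)$ rounds. A straight triangle-inequality sum yields the stated $\log(k)$ factor; the improvement to $\log^{1/2+o(1)} k$ mentioned in the abstract would come from observing that the hashes across rounds are essentially independent, so an $\ell_2$ rather than $\ell_1$ combination is available, with the residual $o(1)$ exponent accounting for a union bound over identification failures. The $\epsilon$ factors in both error terms come from the FPRAS tuning already baked into Theorem~\ref{thm:FPRAS}: enlarging the number of measurements and repetitions by a $1/\epsilon$ factor drives the per-round constants down into the $(1+\epsilon)$ and $\epsilon\log(k)$ factors stated in the conclusion.

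The hard part will be the identification step: a single misidentified heavy coordinate at an early level inflates the residual entering subsequent levels and can destroy the geometric-progress property on which the noiseless decoder relies. The analysis must choose the identification thresholds so that the noise-induced slack is strictly smaller than the gap between the current weight class and the combined tail $\nerr{\nu_1}_2 + \nerr{\nu_2}_2/\nerr{\mtx{\Phi}}_{2\leadsto 2}$, and must then establish enough independence across rounds so that the per-round errors combine in $\ell_2$ rather than $\ell_1$. Once those two pieces are in place, the quantitative bound in the theorem falls out by solving the resulting recursive error inequality across the $\log k$ rounds.
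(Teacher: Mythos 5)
Your proposal follows the paper's high-level architecture (re-analyze the iterative identify/estimate decoder under additive measurement noise), but it is missing the one idea the paper actually relies on to control $\nu_2$, and without it the key step fails. You claim that ``thanks to the random hashing inside $\mtx{\Phi}$, each test sees only $O(\pnorm{2}{\nu_2}/\pnorm{2\leadsto 2}{\mtx{\Phi}})$ extra noise on average.'' But the hashing and sign-flip randomness act on the \emph{signal} side; $\nu_2$ is added directly to the measurement vector, coordinate by coordinate, and is adversarial. An adversary can concentrate all of $\pnorm{2}{\nu_2}$ on the handful of rows used for identification at iteration $j=1$ (or on the $\Theta(\log\log N)$ magnitude-estimation rows of a single bucket), and no amount of hash-bucket independence or Bernstein concentration spreads that out. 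The paper's mechanism is different and essential: the measurement matrix is $\mtx{P}$ times the stack of per-iteration matrices, where $\mtx{P}$ is a random row permutation applied \emph{before} $\nu_2$ is added and undone by the decoder afterwards, so the adversarial $\nu_2$ becomes randomly permuted relative to the logical measurement structure; Markov's inequality then gives each measurement its fair share $\pnorm{2}{\nu_2}^2/m$ of noise energy for identification, and for estimation the paper invokes negative dependence of permutation distributions (citing Dubhashi--Ranjan) because the permuted contributions are not independent. Your proposal contains no substitute for this randomization, so the per-test noise bound at the heart of your argument is unjustified.

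Your accounting of the $\log(k)$ factor also diverges from what actually happens, in a way that matters. In the paper the absolute $\nu_2$-induced error is $O\bigl(\pnorm{2}{\nu_2}/\sqrt{\log(N/k)}\bigr)$ with no per-round summation loss at all (the per-round estimation budgets $\frac14(3/4)^j$ form a geometric series); the $\log(k)$ appears only because the normalizing quantity satisfies $\pnorm{2\leadsto 2}{\mtx{\Phi}}^2=O(\log^2(k)\log(N/k))$, owing to the $O(j)$ parallel repetitions at iteration $j$. Consequently the improvement to $\log^{1/2+o(1)}k$ is obtained not by an $\ell_2$ combination across rounds, as you propose, but by lowering the per-iteration failure probability target to $1/j^2$ so that only $\log(j)$ repetitions are needed, which shrinks the column sparsity and hence the matrix norm. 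Relatedly, your worry about a misidentified coordinate destroying geometric progress is resolved in the paper not by sharper thresholds plus cross-round independence, but by the compound loop invariant: up to $\frac14\cdot\frac{k}{2^j}$ missed or badly estimated positions are simply rolled into the sparse residual $\signal^{(j+1)}$ for the next round, whose more stringent error and failure-probability budgets are affordable because fewer heavy hitters remain. Your reduction framing (treating $\nu_2$ as a fictitious signal-side perturbation of size $\pnorm{2}{\nu_2}/\pnorm{2\leadsto 2}{\mtx{\Phi}}$) is also only heuristic, since an arbitrary $\nu_2$ need not lie in the range of $\mtx{\Phi}$ at that scale --- though this is cosmetic, as you in fact reopen the decoder; the substantive gap is the missing permutation/negative-dependence argument above.
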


Previous sublinear algorithms begin with the observation that if a signal consists of a single heavy hitter, then the trivial encoding of the positions 1 through $N$ with $\log(N)$ bits, referred to as a bit tester, can identify the position of the heavy hitter.  The second observation is that a number of hash functions drawn at random from a hash family are sufficient to isolate enough of the heavy hitters, which can then be identified by the bit tester.  Depending on the type of error metric desired, the hashing matrix is pre-multiplied by random $\pm 1$ vectors (for the $\ell_2$ metric) in order to estimate the signal values.  In this case, the measurements are referred to as the {\sc Count Sketch} in the data stream literature~\cite{CCF} and, without the premultiplication, the measurements are referred to as {\sc Count Median}~\cite{CM03b,CM06:Combinatorial-Algorithms} and give $\ell_1\leq C \ell_1$ error guarantees.  In addition, the sublinear algorithms are typically greedy, iterative algorithms that recover portions of the heavy hitters with each iteration or that recover portions of the $\ell_2$ (or $\ell_1$) energy of the residual signal.

We build upon the {\sc Count Sketch} design but incorporate the following algorithmic innovations to ensure an optimal number of measurements:
\begin{itemize}
	\item With a random assignment of $N$ signal positions to $O(k)$ measurements, we need to encode only $O(N/k)$ positions, rather than $N$ as in the previous approaches.  So, we can reduce the domain size which we encode.
	\item We use a good error-correcting code (rather than the trivial identity code of the bit tester).
	\item Our algorithm is an iterative algorithm but maintains a {\em compound} invariant: the number of un-discovered heavy hitters decreases at each iteration while, simultaneously, the required error tolerance and failure probability become more stringent.  Because there are fewer heavy hitters to find at each stage, we can use more measurements to meet more stringent guarantees.
\end{itemize}

In Section~\ref{sec:preliminaries} we detail the matrix algebra we use to describe the measurement matrix distribution which we cover in Section~\ref{sec:recoverysystem}, along with the decoding algorithm.  In Section~\ref{sec:analysis}, we analyze the foregoing recovery system.

\section{Preliminaries}
\label{sec:preliminaries}

\subsection{Vectors}
Let $\signal$ denote a vector of length $N$.  For each $k \le N$, let
$\signal_k$ denote either the usual $k$'th component of $\signal$ or
the signal of length $N$ consisting of the $j$
largest-magnitude terms in $\signal$; it will be clear from context.
The signal $\signal_k$ is the best
$k$-term representation of $\signal$.  The energy of a signal
$\signal$ is $\pnorm{2}{\signal}^2 = \sum_{i=1}^N |\signal_i|^2$.


\subsection{Matrices}
\begin{figure}[h!]
	\centering
	\begin{tabular}{|c|c|l|l|} \hline
		operator & name & input & output dimensions and construction \\ \hline
		\multirow{2}{*}{$\oplus_r$} & {row direct sum} & 
		      $\mtx{A} \colon r_1 \times N$ & $\mtx{M} \colon (r_1 + r_2) \times N$ \\ 
		&   & $\mtx{B} \colon r_2 \times N$ & $\mtx{M}_{i,j} = 
		            \begin{cases} \mtx{A}_{i,j}, & 1 \leq i \leq r_1 \\
		                          \mtx{B}_{i-r_1,j}, & 1+r_1 \leq i \leq r_2 
		  			 \end{cases}$ \\ \hline
		\multirow{2}{*}{$\odot$} & {element-wise product} & 
		       $\mtx{A} \colon r \times N$ & $\mtx{M} \colon r \times N$ \\ 
		&    & $\mtx{B} \colon r \times N$ & $\mtx{M}_{i,j} = \mtx{A}_{i,j}\mtx{B}_{i,j}$ \\ \hline
		\multirow{2}{*}{$\sdp$} & {semi-direct product} & 
		       $\mtx{A} \colon r_1 \times N$ & $\mtx{M} \colon (r_1 r_2) \times N$ \\ 
		&    & $\mtx{B} \colon r_2 \times h$ & $\mtx{M}_{i+(k-1)r_2,\ell} = 
		          \begin{cases} 0, & \mtx{A}_{k,\ell} = 0 \\
                  \mtx{A}_{k,\ell}\mtx{B}_{i,j}, & \mtx{A}_{k,\ell} = \text{ $j$th nonzero in row $\ell$}
			  	  \end{cases}$ \\ \hline
	\end{tabular}
	\caption{Matrix algebra used in constructing an overall measurement matrix.  The last column contains both the output dimensions of the matrix operation and its construction formula.}
	\label{table:matrixalgebra}
\end{figure}

In order to construct the overall measurement matrix, we form a number of different types of combinations of constituent matrices and to facilitate our description, we summarize our matrix operations in Table~\ref{table:matrixalgebra}. The matrices that result from all of our matrix operations have $N$ columns and, with the exception of the semi-direct product of two matrices $\sdp$, all operations are performed on matrices $\mtx{A}$ and $\mtx{B}$ with $N$ columns.  A full description can be found in the Appendix.


\section{Sparse recovery system}
\label{sec:recoverysystem}
In this section, we specify the measurement matrix and detail the decoding algorithm. 
\subsection{Measurement matrix}
\label{sec:matrix}

The overall measurement matrix, $\mtx{\Phi}$, is a multi-layered matrix with entries in $\{-1, 0, +1\}$. At the highest level, $\mtx{\Phi}$ consists of a random permutation matrix $\mtx{P}$ left-multiplying  the row direct sum of
$(\lg(k))$ summands, $\mtx{\Phi}^{(j)}$, each of which is used in a separate iteration of the decoding algorithm.  Each summand $\mtx{\Phi}^{(j)}$ is the row direct sum of two separate matrices, an {\em identification} matrix, $\mtx{D}^{(j)}$, and an {\em estimation} matrix, $\mtx{E}^{(j)}$.
\[  
    \mtx{\Phi} = \mtx{P} \begin{bmatrix}
		\mtx{\Phi}^{(1)} \\ \hline
	    \mtx{\Phi}^{(2)} \\ \hline
	    \vdots \\ \hline
		\mtx{\Phi}^{(\lg(k))} \\
   		\end{bmatrix}
	\qquad\text{where $\mtx{\Phi}^{(j)} = \mtx{E}^{(j)} \rds \mtx{D}^{(j)}$}.
\]

In iteration $j$, the identification matrix $\mtx{D}^{(j)}$ consists of the row
direct sum of $O(j)$ matrices, all chosen independently from the same
distribution.  We construct the distribution $(\mtx{C}^{(j)} \sdp \mtx{B}^{(j)}) \hp \mtx{S}^{(j)}$ as follows:
\begin{itemize}
\item For $j=1,2,\ldots,\lg(k)$, the matrix $\mtx{B}^{(j)}$ is a
  Bernoulli matrix with dimensions
  $kc^j$-by-$N$, where $c$ is an appropriate constant $1/2< c <1$.  Each entry
  is $1$ with probability $\Theta\left( 1/(k c^j) \right)$ and
  zero otherwise. Each row is a pairwise independent family and the
  set of row seeds is fully independent.
\item The matrix $\mtx{C}^{(j)}$ is an encoding of positions by an error-correcting code
  with constant rate and relative distance.  That is,
  fix an error-correcting code and encoding/decoding algorithm that
  encodes messages of $\Theta(\log\log N)$ bits into longer codewords,
  also of length $\Theta(\log\log N)$, and can correct a constant
  fraction of errors.  
  The $i$'th column of $\mtx{C}^{(j)}$ is the direct
  sum of $\Theta(\log\log N)$ copies of $1$ with the direct sum of $E(i_1),
  E(i_2), \dots$, where $i_1,i_2,\dots$ are blocks of $O(\log\log N)$
  bits each whose concatenation is the binary expansion of $i$ and
  $E(\cdot)$ is the encoding function for the error-correcting code. The
  number of columns in $\mtx{C}^{(j)}$ matches the maximum number of
  non-zeros in $\mtx{B}^{(j)}$, which is approximately the expected
  number, $\Theta\left(c^j N/k\right)$, where $c<1$. The number of rows
  in $\mtx{C}^{(j)}$ is 
    the logarithm of the number of columns, since the process of breaking the binary expansion of index $i$ into blocks has rate 1 and encoding by $E(\cdot)$ has constant rate.
\item The matrix $\mtx{S}^{(j)}$ is a pseudorandom sign-flip matrix. Each row is a pairwise independent family of uniform $\pm 1$-valued random variables. The sequence of seeds for the rows is a fully independent family.
The size of $\mtx{S}^{(j)}$ matches the size of $\mtx{C}^{(j)} \sdp \mtx{B}^{(j)}$.
\end{itemize}
Note that error correcting encoding often is accomplished by a
matrix-vector product, but we are {\em not} encoding a linear
error-correcting code by the usual generator matrix process.  Rather,
our matrix explicitly lists all the codewords.  The code may be
non-linear.

The identification matrix at iteration $j$ is of the form
\[   \mtx{D}^{(j)} = \begin{bmatrix}
					\Big[(\mtx{C}^{(j)} \sdp \mtx{B}^{(j)})
					    \hp \mtx{S}^{(j)}\Big]_1 \\ \hline
					\hdots \\ \hline
					\Big[(\mtx{C}^{(j)} \sdp \mtx{B}^{(j)})
					    \hp \mtx{S}^{(j)}\Big]_{O(j)} \\
					\end{bmatrix}.
\]

In iteration $j$, the estimation matrix $\mtx{E}^{(j)}$ consists of
the direct sum of $O(j)$ matrices, all chosen independently from the
same
distribution, $\mtx{B'}^{(j)} \hp \mtx{S'}^{(j)}$, so that the estimation matrix at iteration $j$ is of the form
\[   \mtx{E}^{(j)} = \begin{bmatrix}
					\Big[\mtx{B'}^{(j)} \hp \mtx{S'}^{(j)}\Big]_1 \\ \hline
					\hdots \\ \hline
					\Big[\mtx{B'}^{(j)} \hp \mtx{S'}^{(j)}\Big]_{O(j)} \\
					\end{bmatrix}.
\]
The construction of the distribution is similar to that of the
identification matrix, but omits the error-correcting code and uses
different constant factors, etc., for the number of rows compared with
the analogues in the identification matrix.
\begin{itemize}
 \item The matrix $\mtx{B'}^{(j)}$ is Bernoulli with dimensions
 $O(k c^j)$-by-$N$, for appropriate $c$, $1/2<c<1$.  Each entry is $1$
   with probability $\Theta\left(1/(kc^j)\right)$ and zero
   otherwise. Each row is a pairwise independent family and the set of
   seeds is fully independent.
	\item The matrix $\mtx{S'}^{(j)}$ is a pseudorandom sign-flip matrix of the same dimension as $\mtx{B'}^{(j)}$.
Each row of $\mtx{S'}^{(j)}$ is a pairwise independent family of uniform $\pm 1$-valued random variables. The sequence of seeds for the rows is a fully independent family.
\end{itemize}

\subsection{Measurements}

The overall form of the measurements mirrors the structure of the
measurement matrices.  We do not, however, use all of the measurements
in the same fashion.  In iteration $j$ of the algorithm, we use the measurements
$\vct{y}^{(j)} = \mtx{\Phi}^{(j)}\signal$.  As the matrix
$\mtx{\Phi}^{(j)} = \mtx{E}^{(j)} \rds \mtx{D}^{(j)}$, we have a
portion of the measurements $\vct{w}^{(j)} = \mtx{D}^{(j)} \signal$
that we use for identification and a portion $\vct{z}^{(j)} =
\mtx{E}^{(j)} \signal$ that we use for estimation.  The
$\vct{w}^{(j)}$ portion is further decomposed into measurements 
$[\vct{v}^{(j)}, \vct{u}^{(j)}]$
corresponding to the run of $O(\log\log N)$ 1's in $\mtx{C}^{(j)}$ and
measurements corresponding to each of the blocks in the
error-correcting code.   There are $O(j)$ i.i.d. repetitions of
everything at iteration $j$.


\subsection{Decoding}

\label{sec:decoding}
The decoding algorithm is shown in Figure \ref{fig:decodingalg} in the Appendix.

\section{Analysis}
\label{sec:analysis}
In this section we analyze the decoding algorithm for correctness and efficiency.

\subsection{Correctness}
\label{sec:proof}

Let $\signal = \signal_k + \nu_1$ where we assume $\signal$ is
normalized so that $\pnorm{2}{\nu_1} = 1$ and $\signal_k$ is the vector $\signal$ with all but the largest-magnitude $k$ entries zeroed
out.   Our goal is to guarantee an approximation $\widehat \signal$
with approximation error $\pnorm{2}{\signal - \widehat \signal} \leq
(1+\epsilon)\pnorm{2}{\nu_1} + \epsilon\pnorm{2}{\nu_2}$.  But observe
that $\nu_2$ is a different type of object from $\signal$ or $\widehat
\signal$; $\nu_2$ is added to $\mtx{\Phi}\signal$.  For the main theorem to
make sense, therefore, we need to normalize $\mtx{\Phi}$.  We discuss
this now.

Observe that the matrix $\mtx{\Phi}$ can be scaled up by an
arbitrary constant factor $c>1$ which can be undone by the decoding algorithm:  Let $\mathcal{D'}$ be a new decoding algorithm that calls the old decoding algorithm $\mathcal{D}$ as follows:  $\mathcal{D'}(\vct{y})=\mathcal{D}\left(\frac1c
\vct{y}\right)$, so that $\mathcal{D'}(c\mtx{\Phi}\signal+\nu_2)
=\mathcal{D}\left(\mtx{\Phi}\signal+\frac1c\nu_2\right)$.
Thus we can {\em reduce} the effect of $\nu_2$ by an arbitrary factor
$c$ and so citing performance in terms of $\nerr{\nu_2}$ alone is not
sensible.  Note also that $\nu_2$ and $\signal$ are different types of
objects; $\mtx{\Phi}$, as an operator, takes an object of the type of
$\signal$ and produces an object of the type of $\nu_2$.  We
will stipulate that the appropriate norm of $\mtx{\Phi}$ be
bounded by 1, in order to make our results quantitatively comparable
with others.  Our error guarantee is in $\ell_2$ norm, so we should
use a 2-operator norm; i.e.,, $\max\pnorm{2}{\mtx{\Phi}\vct{x}}$
over $\vct{x}$ with $\pnorm{2}{\vct{x}}=1$.  But our algorithm's
guarantee is in the ``for each'' signal model, so we need to modify the norm
slightly.  
\begin{definition}
The $\pnorm{2\leadsto 2}{\mtx{\Phi}}$ norm of a randomly-constructed matrix
$\mtx{\Phi}$ is $\max_{\vct{x}} 
           \Expect\left[\frac{\pnorm{2}{\mtx{\Phi}\vct{x}}}{\vct{x}}\right]$.
the smallest $M$ such that, for all $\vct{x}$ with
$\pnorm{2}{\vct{x}}=1$, we have $\pnorm{2}{\mtx{\Phi}\vct{x}}<M$
except with probability 1/4.
\end{definition}

Now we bound $\pnorm{2 \leadsto 2}{\mtx{\Phi}}$.  Each row
$\rho$ of a Bernoulli$(p)$ matrix with sign flips,
$\mtx{B}\hp\mtx{S}$, satisfies
$\Expect[|\rho\vct{x}|^2]=p\pnorm{2}{x}^2$.  So $1/p$ such rows satisfy
$\pnorm{2}{(\mtx{B}\hp\mtx{S})\vct{x}}^2\le
O\left(\pnorm{2}{\vct{x}}^2\right)$.  Our matrix $\mtx{\Phi}$ repeats
the above $j$ times in the $j$'th iteration, $j\le\log_2(k)$,
and combines it with an error-correcting code matrix of
$\Theta(\log(N/k))$ dense rows.  It follows that 
\[  \pnorm{2\leadsto 2}{\mtx{\Phi}}^2 = O(\log^2(k)\log(N/k)).
\]

We are ready to state the main theorem.

\begin{reftheorem}{thm:robustFPRAS}
        Consider the matrices
        in~Section~\ref{sec:matrix} and the algorithms in
        Section~\ref{sec:decoding} (that share randomness with the
        matrices).
	The joint distribution on those matrices and algorithms, with
        suitable instantiations of anonymous constant factors (that
        may depend on $\epsilon$), are such that, given measurements
          $\mtx{\Phi}\signal + \nu_2 = \vct{y} + \nu_2$, the algorithm returns
          $\widehat \signal$ with approximation error
	\[ \pnorm{2}{\signal - \widehat \signal} \leq
        (1+\epsilon)\pnorm{2}{\nu_1} +
               \epsilon\log(k)\frac{\nerr{\nu_2}_2}{\nerr{\Phi}_{2\leadsto 2}}
	\]
	with probability $3/4$.  The algorithm runs in time $k
        \log^{c} N$ and $\mtx{\Phi}$ has $O(k \log(N/k))$ rows.
\end{reftheorem}

In this extended abstract, we give the proof only for $\epsilon=1$.
Our results generalize in a straightforward way for general
$\epsilon>0$ (roughly, by replacing $k$ with $k/\epsilon$ at the
appropriate places in the proof) and the number of measurements is
essentially optimal in $\epsilon$.  Because our approach builds upon the {\sc Count Sketch} approach in~\cite{CCF}, we omit the proof of intermediary steps that have appeared earlier in the literature.

We maintain the following invariant. At the beginning of iteration
$j$, the residual signal has the form
\begin{equation}  \vct{r}^{(j)} = \signal^{(j)} + \nu_1^{(j)}
  \quad\text{with $\pnorm{0}{\signal^{(j)}} \leq \frac{k}{2^j}$, and
    $\pnorm{2}{\nu_1^{(j)}} \leq 2 - \Big(\frac{3}{4}\Big)^j$}
\tag{{\sc Loop Invariant}}
\end{equation}
except with probability $\frac14(1 - (\frac12)^j)$, where
$\pnorm{0}{\cdot}$ is the number of non-zero entries.  The vector
$\signal^{(j)}$ consists of residual elements of $\signal_k$.
Clearly, maintaining the invariant is sufficient to prove the overall
result.  In order to show that the algorithm maintains the loop
invariant, we demonstrate the following claim.

\begin{claim}
   Let $\vct{b}^{(j)}$ be the vector we recover at iteration $j$.
\begin{itemize}
\item The vector $\vct{b}^{(j)}$ contains all but at most $\frac14
  \frac{k}{2^j}$ residual elements of $\signal_k^{(j)}$, with ``good''
  estimates.
\item The vector $\vct{b}^{(j)}$ contains at most $\frac14
  \frac{k}{2^j}$ residual elements of $\signal_k$ with ``bad''
  estimates.
\item The total sum square error over all ``good'' estimates is at
  most
\[
  \left[2 - \left( \frac34 \right)^{j+1} \right]
 -\left[2 - \left( \frac34 \right)^j \right]
 = \frac14 \left( \frac34 \right)^j.
\]
\end{itemize}
\end{claim}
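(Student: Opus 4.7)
The plan is to prove the three bullets by combining standard Count-Sketch style hashing analysis (for isolation and estimation) with an error-correcting-code style identification, driven by the $O(j)$ independent repetitions whose purpose is to push the failure probability down as $j$ grows.

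First I would fix attention on a single repetition of the identification/estimation pair at iteration $j$, in which $\mtx{B}^{(j)}$ hashes $N$ coordinates into $\Theta(kc^j)$ buckets and $\mtx{S}^{(j)}$ applies $\pm 1$ signs. Writing $\signal_k^{(j)}$ for the $\le k/2^j$ surviving heavy coordinates, a pairwise-independent argument (essentially the Count-Sketch bucket variance bound of \cite{CCF}) shows that for any fixed heavy coordinate $i$, the bucket containing $i$ is ``clean'' (no other heavy coordinate falls in it, and the tail noise contributed to the bucket has squared magnitude $\le \alpha |\signal^{(j)}(i)|^2/2$ for some small constant $\alpha$) except with a probability $p_0$ I can drive to an arbitrarily small constant by tuning the constant in the bucket count $\Theta(kc^j)$. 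Thus in expectation only a small constant fraction of the $k/2^j$ heavy coordinates are ``dirty.''

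Next I would analyze the identification step inside a clean bucket. The matrix $\mtx{C}^{(j)}\sdp\mtx{B}^{(j)}$ appends to each bucket, for the position $i$ landing there, a run of $\Theta(\log\log N)$ copies of the bucket value (used to read off the magnitude / sign bit) together with $\Theta(\log\log N)$ blocks encoding the $\Theta(\log\log N)$-block decomposition of the binary expansion of $i$ through a constant-rate, constant-distance error-correcting code. Because the domain being encoded has size $O(c^j N/k)$ rather than $N$ (the key measurement saving), the code has length $\Theta(\log(N/k))$ rather than $\Theta(\log N)$. Conditioning on the bucket being clean, each block is read correctly with constant probability (the tail noise is too small to flip the sign/majority of an individual block's sign-flipped estimate), so by Chernoff over the $\Theta(\log\log N)$ blocks, the fraction of corrupted blocks is below the ECC's decoding radius except with probability $1/\polylog(N)$, and the position $i$ is recovered exactly. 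I would then use the estimation matrix $\mtx{E}^{(j)}=\mtx{B'}^{(j)}\hp \mtx{S'}^{(j)}$, which is an independent Count-Sketch, to produce an unbiased estimate $\widehat{x}_i$ with $\Expect[(\widehat{x}_i - x_i)^2] \le \beta \pnorm{2}{\nu_1^{(j)}}^2/(kc^j)$.

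Now I would use the $O(j)$ independent repetitions to amplify. A coordinate is declared ``found with a good estimate'' only if it is recovered (identified and median-estimated) consistently across a majority of the $O(j)$ repetitions. A standard Chernoff bound on independent trials with per-trial failure probability at most $p_0$ pushes the per-coordinate failure probability below $2^{-\Omega(j)}$, and a union bound over the $k/2^j$ surviving heavy coordinates gives overall failure probability at most $\tfrac14\bigl((\tfrac12)^j - (\tfrac12)^{j+1}\bigr)$, which feeds the invariant's failure-probability budget. Markov's inequality applied to the expected number of ``dirty'' heavy coordinates (plus the amplified probability that a clean one fails) yields the first two bullets: all but $\tfrac14 k/2^j$ heavy coordinates are captured with a good estimate, and the median-of-medians together with the ECC guarantees that at most $\tfrac14 k/2^j$ positions survive as bad estimates (spurious outputs come only from repetitions in which identification fails, and these are the ones being rejected by the majority vote). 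For the third bullet, I sum the expected squared estimation errors, $(k/2^j)\cdot \beta \pnorm{2}{\nu_1^{(j)}}^2/(kc^j) = O\bigl(\beta (2c)^{-j}\bigr)$, and then apply Markov to convert expectation into a tail bound of the required form. Choosing $c$ so that $1/(2c)<3/4$ (and then absorbing into the anonymous constants) makes the right-hand side of the sum square error bound, $\tfrac14(3/4)^j$, hold except with a small failure probability to be charged against the same $\tfrac14((1/2)^j)$ budget.

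The main obstacle I expect is the simultaneous shrinkage of three quantities along $j$: the number of remaining heavy coordinates ($k/2^j$), the tolerated failure probability ($\tfrac14((1/2)^j)$), and the energy budget ($\tfrac14(3/4)^j$), while the number of measurements per iteration grows only by a factor of $O(j)$. The delicate bookkeeping is that amplification by $O(j)$ trials yields failure probability $2^{-\Omega(j)}$ which, against a budget of $2^{-\Omega(j)}$, forces the hidden constants to be chosen in a specific order (first the ECC constants to fix decoding radius, then the bucket constant in $\mtx{B}^{(j)}$ to control $p_0$, then the repetition constant in the $O(j)$ amplification). Similarly, the energy budget depending on $(3/4)^j$ rather than $(1/2)^j$ is the reason to require $c<1$ strictly but only mildly (so that $(2c)^{-j}$ shrinks slower than $(1/2)^j$ but faster than $(3/4)^j$). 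These constant choices are exactly the ``suitable instantiations of anonymous constant factors'' alluded to in the theorem statement, and verifying that one consistent choice exists is the bulk of what remains after the per-iteration bucket/ECC analysis sketched above.
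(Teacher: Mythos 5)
Your overall architecture is the same as the paper's (Bernoulli/sign-flip isolation, error-correcting-code identification over the reduced domain of size $O(c^jN/k)$, an independent Count-Sketch estimator, $O(j)$ repetitions, Markov to convert expectations into the bullets), but your Step 1 has a genuine gap. You define a bucket as ``clean'' only if its tail noise has squared magnitude at most $\alpha|\signal^{(j)}_i|^2/2$, a \emph{relative} condition, and claim its failure probability $p_0$ can be made an arbitrarily small constant for every heavy coordinate. That is unprovable: for a residual element whose magnitude lies below the per-bucket noise level, the event fails with constant probability no matter how you tune the bucket-count constant, and in the worst case \emph{all} $k/2^j$ residual elements are of this kind, so ``in expectation only a small constant fraction are dirty'' is simply false. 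The paper avoids this by introducing an explicit magnitude threshold, $\frac18\bigl((3/4)^j\frac{2^j}{k}\pnorm{2}{\nu_1^{(j)}}\bigr)$: identification and estimation are promised only for the $j$-large heavy hitters, while the at most $k/2^j$ sub-threshold elements are abandoned and their total energy, $\frac18(3/4)^j\pnorm{2}{\nu_1^{(j)}}$, is reclassified into the next round's noise $\nu_1^{(j+1)}$ rather than counted against the support bound. This reclassification is what makes the first bullet true in the all-small scenario, and it consumes half of the $\frac14(3/4)^j$ energy budget---so your third-bullet accounting, which spends the entire budget on the estimation error of recovered coordinates, double-counts; the paper splits the budget, $\frac18(3/4)^j$ for abandoned small heavy hitters and $\frac18(3/4)^j$ for estimation error (obtained by adding $O((4/3)^j)$ times more estimation rows, $O(k(8/9)^j)$ in all, with $c=2/3$ fixed, where you instead tune $c>2/3$; that substitution itself is fine).

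A secondary bookkeeping slip: your union bound over the $k/2^j$ surviving coordinates, each failing with amplified probability $2^{-\Omega(j)}$, yields total failure $(k/2^j)2^{-\Omega(j)}$, which for small $j$ and large $k$ vastly exceeds the $\frac14(1/2)^j$ probability budget. The correct route---which you gesture at in the next sentence---is to bound the \emph{expected number} of failed coordinates by $(k/2^j)2^{-\Omega(j)}$ and apply Markov against the $\frac14\frac{k}{2^j}$ miss allowance of the first two bullets, charging only the resulting deviation probability $2^{-\Omega(j)}$ to the invariant's failure budget; this is also how the paper's fractional guarantee (``at least $3/4$ of the $j$-large heavy hitters'') should be read. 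Finally, note that the paper's \textsc{Identify} simply unions candidates across the $O(j)$ repetitions and lets \textsc{Estimate}'s medians, together with the bad-estimate allowance of the second bullet, absorb spurious positions; your cross-repetition majority vote is a harmless variant but is not needed once bullet two is interpreted as an allowance rather than something to be prevented.
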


\begin{proof}
To simplify notation, let $T$ be the set of un-recovered elements of
$\signal_k$ at iteration $j$; i.e., the support of
$\signal^{(j)}$.  We know that $|T| \leq k/2^j$.  The proof proceeds in three steps.
\vspace{0.25cm}

\noindent\textbf{Step 1. Isolate heavy hitters with little noise.}
Consider the action of a Bernoulli sign-flip matrix $\mtx{B}\hp \mtx{S}$
with $O(k/2^j)$ rows. From previous work~\cite{CCF,AMS99:Space-Frequency}, it follows that, if constant factors parametrizing the matrices are chosen properly, 
\begin{lemma}
	\label{lemma:isolation}
	For each row $\rho$ of $\mtx{B}$, the following holds with probability 
	$\Omega(1)$:
\begin{itemize}
\item There is exactly one element $t$ of $T$ ``hashed'' by
  $\mtx{B}$; i.e., there is exactly one $t\in T$ with
  $\rho_t=1$.
\item There are $O(N\cdot 2^j/k)$ total positions (out of $N$) hashed
  by $\mtx{B}$.
\item The dot product $(\rho \hp \mtx{S})\vct{r}^{(j)}$ is
  $\mtx{S}_t\vct{r}^{(j)}_t\pm
  O\left(\frac{2^j}{k}\pnorm{2}{\nu^{(j)}_1}\right)$.
\end{itemize}
\end{lemma}
\begin{proof}
	(Sketch.)
For intuition, note that the estimator $\mtx{S}_t(\rho \hp \mtx{S})
\vct{r}^{(j)}$ is a random variable with mean $\vct{r}^{(j)}_t$ and
variance $\pnorm{2}{\nu_1^{(j)}}^2$.  Then the third claim and the first two
claims assert that the expected behavior happens with probability
$\Omega(1)$.
\end{proof}

In our matrix $\mtx{B}^{(j)}$, the number of rows is not $k/2^j$ but
$kc^j$ for some $c$, $1/2 < c < 1$.  Take $c=2/3$.  We obtain a stronger conclusion to the lemma.  The dot product $(\rho \hp \mtx{S})\vct{r}^{(j)}$ is
\[  
       \mtx{S}_t\vct{r}^{(j)}_t\pm
             O\left(\frac1{k(2/3)^j}\pnorm{2}{\nu^{(j)}_1}\right)
       =
      \mtx{S}_t\vct{r}^{(j)}_t\pm
                \frac18\left((3/4)^j\frac{2^j}{k}\pnorm{2}{\nu^{(j)}_1}\right),
\]
provided constants are chosen properly.
Our lone hashed heavy hitter $t$ will dominate the dot product provided
\[  \left|\vct{r}^{(j)}_t\right| \geq  
           \frac18\left((3/4)^j\frac{2^j}{k}\pnorm{2}{\nu^{(j)}_1}\right).
\]
We show in the remaining steps that we can likely recover such heavy
hitters; i.e., {\sc Identify} identifies them and {\sc Estimate}
returns a good estimate of their values.  There are at most $(k/2^j)$
heavy hitters of magnitude less than
$\frac18\left((3/4)^j\frac{2^j}{k}\pnorm{2}{\nu^{(j)}_1}\right)$ which
we will not be able to identify nor to estimate but they contribute a
total of
$\frac18\left((3/4)^j\pnorm{2}{\nu^{(j)}_1}\right)$ 
noise energy to the residual for the next round (which still meets our invariant).
\vspace{0.25cm}

\noindent\textbf{Step 2. Identify heavy hitters with little noise.}
Next, we show how to identify $t$.  Since there are $N/k^{\Theta(1)}$
positions hashed by $\mtx{B}^{(j)}$, we need to learn the
$O(\log(N/k))$ bits describing $t$ in this context.  Previous sublinear algorithms~\cite{CM06:Combinatorial-Algorithms,GSTV07:HHS} used a trivial error correcting code, in which the $t$'th
column was simply the binary expansion of $t$ in direct sum with a
single 1.  Thus, if the signal consists of $\signal_t$ in the $t$'th
position and zeros elsewhere, we would learn $\signal_t$ and
$\signal_t$ times the binary expansion of $t$ (the latter interpreted
as a string of 0's and 1's as real numbers).  These algorithms require strict control on the failure probability of each measurement in order to use such a trivial encoding.  In our case, each measurement succeeds only with
probability $\Omega(1)$ and, generally, fails with probability
$\Omega(1)$.  So we need to use a more powerful error correcting
code and a more reliable estimate of $|x_t|$.

To get a reliable estimate of $|\signal_t|$, we use the
$b=\Theta(\log\log N)$-parallel repetition code of all 1s.  That is, we get $b$
independent measurements of $|\signal_t|$ and we decode by taking the
median.  Let $p$ denote the success
probability of each individual measurement.  Then we expect the
fraction $p$ to be
approximately correct estimates of $|\signal_t|$, we achieve close to
the expectation, and we can arrange that
$p>1/2$.  It follows that the median is approximately correct.  We use
this value to threshold the subsequent measurements (i.e., the bits in
the encoding) to $0/1$ values.

Now, let us consider these bit estimates.  In a single error-correcting code
block of $b=\Theta(\log\log N)$ measurements, we will get close to the
expected number, $bp$, of successful measurements, except with
probability $1/\log(N)$, using the Chernoff bound.  In the favorable
case, we get a number of failures less than the (properly chosen)
distance of the error-correcting code and we can recover the block
using standard nearest-neighbor decoding.  The number of
error-correcting code blocks associated with $t$ is
$O(\log(N/k)/\log\log N)\le O(\log N)$, so we can take a union bound
over all blocks and conclude that we recover $t$ with probability
$\Omega(1)$.  The invariant requires that the failure probability decrease with
$j$.  Because the algorithm takes $O(j)$ parallel independent repetitions, we guarantee that the failure probability decreases with $j$ by taking the union over the repetitions.

We summarize these discussions in the following lemma. We refer to these heavy hitters in the list $\Lambda$ as the $j$-large heavy hitters.
\begin{lemma}
	{\sc Identify} returns a set $\Lambda$ of signal positions that contains at least $3/4$ of the heavy hitters in $T$, $|T| \leq k/2^j$, that have magnitude
at least $\frac18\left((3/4)^j\frac{2^j}{k}\pnorm{2}{\nu^{(j)}_1}\right)$.
\end{lemma}

We also observe that our analysis is consistent with the bounds we give on the additional measurement noise $\nu_2$.  The
permutation matrix $\mtx{P}$ in $\mtx{\Phi}$ is applied before $\nu_2$
is added and then $\mtx{P}^{-1}$ is applied after $\nu_2$ by the
decoding algorithm.  It follows that we can assume $\nu_2$ is permuted
at random and, therefore, by Markov's inequality, each measurement gets at most an
amount of noise energy proportional to its fair share of
$\pnorm{2}{\nu_2}^2$.  Thus, If there are $m=\Theta(k\log N/k)$
measurements, each measurement gets $\frac{\pnorm{2}{\nu_2}^2}m$ noise
energy and identification succeeds anyway provided the lone heavy
hitter $t$ in that bucket has square magnitude at least
$\frac{\pnorm{2}{\nu_2}^2}m$, so the at most $k$ smaller heavy
hitters, that we may miss, together contribute energy
$\frac{k\pnorm{2}{\nu_2}^2}m = 
      O\left(\frac{\pnorm{2}{\nu_2}^2}{\log(N/k)}\right)$.  
If we recall the definition and value of $\pnorm{2\leadsto 2}{\mtx{\Phi}}$, we see that this error meets our bound.
\vskip 2pc

\noindent\textbf{Step 2. Estimate heavy hitters.}
Many of the details in this step are similar to those in Lemma~\ref{lemma:isolation} (as well as to previous work as the function {\sc Estimate} is essentially the same as {\sc Count Sketch}), so we give only a brief summary.

First, we discuss the failure probability of the {\sc Estimate}
procedure.  Each estimate is a complete failure with probability
$1-\Omega(1)$ and the total number of identified positions is
$O\left(jk\left(2/3\right)^j\right)$.  Because we perform $j$
parallel repetitions in estimation, we can easily arrange to
lower that failure probability, so we
assume that the failure probability is at most
$\Theta\left(\left(3/4\right)^j\right)$, and that we get
approximately the expected number of (nearly) correct estimates.
There are $k(2/3)^j$ heavy hitters in $\Lambda$, so
the expected number of failures is $(1/4)(k/2^j)$.
These, along with the at most $1/4(k/2^j)$ missed $j$-large
heavy hitters, will form  $\signal^{(j+1)}$, the at-most-$k/2^{j+1}$
residual heavy hitters at the next iteration.

In iteration $j$, {\sc Identity} returns a list $\Lambda$ with
$k(2/3)^j$ heavy hitter position identified.  A group of $k(2/3)^j$
measurements in $\mtx{E}^{(j)}$ yields estimates for the positions in
$\Lambda$ with aggregate $\ell_2$ error $\pm O(1)$, additively.  An
additional $O\left(\left(4/3\right)^j\right)$ times more measurements,
$O(k(8/9)^j)$ in all,
improves the estimation error to $(1/8)\left(3/4\right)^j$,
additively.
These errors, together with the omitted heavy hitters that are not $j$-large and
$\nu^{(j)}$ form the new noise vector at the next iteration, $\nu^{(j+1)}$.

Finally, consider the effect of $\nu_2$.  We would like to argue that, as in the identification step, the noise vector $\nu_2$ is permuted at random and each measurement is
corrupted by $\frac{\pnorm{2}{\nu_2}^2}m$, where $m=\Theta(k\log(N/k))$ is the
number of measurements, approximately its fair share of $\pnorm{2}{\nu_2}^2$.
Unfortunately, the contributions of $\nu_2$ to the various measurements are not
independent as $\nu_2$ is permuted, so we cannot use such a simple analysis.  Nevertheless, they are negatively correlated and we can
achieve the result we want using~\cite{Dubhashi96negativedependence}.
The total $\ell_2$ squared error of the corruption over
all $O(k)$ estimates is $\pnorm{2}{\nu_2}^2/\log(N/k)$, which
will meet our bound.  That is, since  $\pnorm{2\leadsto 2}{\mtx{\Phi}}^2
= O(\log^2(k)\log(N/k))$, the $\nu_2$ contribution to the error
is
\[  
   O\left(\frac{\pnorm{2}{\nu_2}}{\sqrt{\log N/k}}\right) 
=  O\left( \frac{ \log(k)\pnorm{2}{\nu_2}}{\pnorm{2\leadsto 2}{\mtx{\Phi}}}\right),
\] 
as claimed, whence we read off the factor, $\log(k)$ (improvable to
$\log^{1/2+o(1)} k$), which is
directly comparable to other results that scale $\mtx{\Phi}$
properly.

\end{proof}

\subsection{Efficiency}

\subsubsection{Number of Measurements}

The analysis of isolation and estimation matrices are similar; the
number of measurements in isolation dominates.

The number of measurements in iteration $j$ is computed as follows.
There are $O(j)$ parallel repetitions in iteration $j$.  They each
consist of $k(2/3)^j$ measurements arising out of $\mtx{B}^{(j)}$ for
identification times $O(\log(N/k))$ measurements for the error
correcting code, plus $k(2/3)^j$ times $O((4/3)^j)$ for estimation.
This gives
\[\Theta\left(jk\left(\frac23\right)^j\log(N/k)+jk\left(\frac89\right)^j\right)
=k\log(N/k)\left(\frac89+o(1)\right)^j.\]
Thus we have a sequence bounded by a geometric sequence with ratio
less than 1.  The sum, over all $j$, is $O(k\log(N/k))$.

\subsubsection{Encoding and Update Time}

The encoding time is bounded by $N$ times the number of non-zeros in
each column of the measurement matrix.  This was analyzed above in
Section~\ref{sec:proof}; there are $\log^2(k)\log(N/k)$ non-zeros per
column, which is suboptimal by the factor $\log^2(k)$.  By comparison,
some proposed methods use dense matrices, which are suboptimal by the
exponentially-larger factor $k$.  
This can be improved slightly, as follows.
Recall that we used $j$ parallel
repetitions in iteration $j,$ $j<\log(k)$, to make the failure
probability at iteration be; e.g., $2^{-j}$, so the sum over $j$
is bounded.  We could instead use failure probability $1/j^2$, so that
the sum is still bounded, but the number of parallel repetitions will
be $\log(j)$, for $j\le\log(k)$.  This results in
$\log(k)\log\log(k)\log(N/k)$ non-zeros per column and $\nu_2$
contribution to the noise equal to
$\sqrt{\log(k)\log\log(k)}
\frac{\pnorm{2}{\nu_2}}{\pnorm{2\leadsto 2}{\mtx{\Phi}}}$.

We can use a pseudorandom number generator such as
$i\mapsto \left\lfloor(ai+b \bmod d)/B\right\rfloor$ for random $a$
and $b$, where $B$ is the number of buckets.  Then we can, in time
$O(1)$, determine into which bucket any $i$ is mapped and determined
the $i$'th element in any bucket.

Another issue is the time to find and to encode (and to decode) the
error-correcting code.  Observe that the length of the code is
$O(\log\log N)$.  We can afford time exponential in the length,
{\em i.e.}, time $\log^{O(1)}N$, for finding and decoding the code.
These tasks are straightforward in that much time.

\subsubsection{Decoding Time}

As noted above, we can quickly map positions to buckets and find the
$i$'th element in any bucket, and we can quickly decode the
error-correcting code.  The rest of the claimed runtime is
straightforward.

\section{Conclusion}
In this paper, we construct an approximate sparse recovery system that is essentially optimal: the recovery algorithm is a sublinear algorithm (with near optimal running time), the number of measurements meets a lower bound, and the update time, encode time, and column sparsity are each within $\log$ factors of the lower bounds.  We conjecture that with a few modifications to the distribution on measurement matrices, we can extend this result to the $\ell_1 \leq C \ell_1$ error metric guarantee.  We do not, however, think that this approach can be extended to the ``for all'' signal model (all current sublinear algorithms use at least one factor $O(\log N)$ additional measurements) and leave open the problem of designing a sublinear time recovery algorithm and a measurement matrix with an optimal number of rows for this setting.
\bibliography{foreachbib}

\begin{thebibliography}{10}

\bibitem{AMS99:Space-Frequency}
N.~Alon, Y.~Matias, and M.~Szegedy.
\newblock {The Space Complexity of Approximating the Frequency Moments}.
\newblock {\em J. Comput. System Sci.}, 58(1):137--147, 1999.

\bibitem{BaIndykPriceWoodruff:2010}
K.~Do Ba, P.~Indyk, E.~Price, and D.~Woodruff.
\newblock Lower bounds for sparse recovery.
\newblock In {\em ACM SODA}, page to appear, 2010.

\bibitem{CRT06:Stable-Signal}
E.~J. Cand{\`e}s, J.~Romberg, and T.~Tao.
\newblock Stable signal recovery from incomplete and inaccurate measurements.
\newblock {\em Comm. Pure Appl. Math.}, 59(8):1208--1223, 2006.

\bibitem{CCF}
M.~Charikar, K.~Chen, and M.~Farach-Colton.
\newblock Finding frequent items in data streams.
\newblock {\em ICALP}, 2002.

\bibitem{CM03:Whats-Hot}
G.~Cormode and S.~Muthukrishnan.
\newblock What's hot and what's not: {T}racking most frequent items
  dynamically.
\newblock In {\em Proc. ACM Principles of Database Systems}, pages 296--306,
  2003.

\bibitem{CM03b}
G.~Cormode and S.~Muthukrishnan.
\newblock Improved data stream summaries: The count-min sketch and its
  applications.
\newblock {\em FSTTCS}, 2004.

\bibitem{CM06:Combinatorial-Algorithms}
G.~Cormode and S.~Muthukrishnan.
\newblock Combinatorial algorithms for {C}ompressed {S}ensing.
\newblock In {\em Proc. 40th Ann. Conf. Information Sciences and Systems},
  Princeton, Mar. 2006.

\bibitem{Don06:Compressed-Sensing}
D.~L. Donoho.
\newblock {C}ompressed {S}ensing.
\newblock {\em IEEE Trans. Info. Theory}, 52(4):1289--1306, Apr. 2006.

\bibitem{Dubhashi96negativedependence}
Devdatt Dubhashi and Volker Priebe~Desh Ranjan.
\newblock Negative dependence through the fkg inequality.
\newblock In {\em Research Report MPI-I-96-1-020, Max-Planck-Institut fur ¨
  Informatik, Saarbrucken}, 1996.

\bibitem{Erlich09:DNA-Sudoku}
Yaniv Erlich, Kenneth Chang, Assaf Gordon, Roy Ronen, Oron Navon, Michelle
  Rooks, and Gregory~J. Hannon.
\newblock Dna sudoku---harnessing high-throughput sequencing for multiplexed
  specimen analysis.
\newblock {\em Genome Research}, 19:1243---1253, 2009.

\bibitem{GSTV07:HHS}
A.~C. Gilbert, M.~J. Strauss, J.~A. Tropp, and R.~Vershynin.
\newblock One sketch for all: fast algorithms for compressed sensing.
\newblock In {\em ACM STOC 2007}, pages 237--246, 2007.

\bibitem{RI08}
P.~Indyk and M.~Ruzic.
\newblock Near-optimal sparse recovery in the $l_1$ norm.
\newblock {\em FOCS}, 2008.

\bibitem{NT08}
D.~Needell and J.~A. Tropp.
\newblock {CoSaMP}: Iterative signal recovery from incomplete and inaccurate
  samples.
\newblock {\em Appl. Comp. Harmonic Anal.}, 2008.
\newblock To appear.

\bibitem{Brady06:Optics-grouptesting}
Y.~H. Zheng, N.~P. Pitsianis, and D.~J. Brady.
\newblock Nonadaptive group testing based fiber sensor deployment for
  multiperson tracking.
\newblock {\em IEEE Sensors Journal}, 6(2):490--494, 2006.

\bibitem{Brady05:Fiber-graycode}
Y.H. Zheng, D.~J. Brady, M.~E. Sullivan, and B.~D. Guenther.
\newblock Fiber-optic localization by geometric space coding with a
  two-dimensional gray code.
\newblock {\em Applied Optics}, 44(20):4306--4314, 2005.

\end{thebibliography}
\bibliographystyle{plain}   

\section{Appendix}

We have a full description of the matrix algebra defined in Table~\ref{table:matrixalgebra}.
\begin{itemize}
	\item \textbf{Row direct sum.}  The row direct sum $\mtx{A} \rds \mtx{B}$ is a matrix with $N$ columns that is the vertical concatenation of $\mtx{A}$ and $\mtx{B}$. 
	\item \textbf{Element-wise product.} If $\mtx{A}$ and $\mtx{B}$ are both $r \times N$ matrices, then $\mtx{A} \hp \mtx{B}$ is also an $r \times N$ matrix whose $(i,j)$ entry is given by the product of the $(i,j)$ entries in $\mtx{A}$ and $\mtx{B}$.
	\item \textbf{Semi-direct product.} Suppose $\mtx{A}$ is a matrix of $r_1$ rows (and $N$ columns) in which each row has exactly $h$ non-zeros and $\mtx{B}$ is a matrix of $r_2$ rows and $h$ columns. Then $\mtx{B}\sdp \mtx{A}$ is the matrix with $r_1r_2$ rows, in which each non-zero entry $a$ of $\mtx{A}$ is replaced by $a$ times the $j$'th column of $\mtx{B}$, where $a$ is the $j$'th non-zero in its row. This matrix construction has the following interpretation. Consider $(\mtx{B} \sdp \mtx{A})\signal$ where $\mtx{A}$ consists of a single row, $\mtx{\rho}$, with $h$ non-zeros and $\signal$ is a vector of length $N$. Let $\mtx{y} = \mtx{\rho} \hp \signal$ be the element-wise product of $\rho$ and $\signal$. If $\mtx{\rho}$ is 0/1-valued, $\mtx{y}$ picks out a subset of $\signal$. We then remove all the positions in $\mtx{y}$ corresponding to zeros in $\mtx{\rho}$, leaving a vector $\mtx{y'}$ of length $h$. Finally, $(\mtx{B}\sdp\mtx{A})\signal$ is simply the matrix-vector product $\mtx{B} \mtx{y'}$, which, in turn, can be interpreted as selecting subsets of $\mtx{y}$, and summing them up. Note that we can modify this definition when $\mtx{A}$ has fewer than $h$ non-zeros per row in a straightforward fashion. 
\end{itemize}

\begin{center}
\begin{figure}[h!]
\mbox{
  \begin{minipage}[htbf]{6.8in}
\tt 
\begin{tabular}{|p{0.9\textwidth}|}
\hline
\centerline{{\sc Recover}$(\mtx{\Phi},\vct{y})$} \\
Output: $\widehat x =$ approximate representation of $x$ \\ \\
$\vct{y}=\mtx{P}^{-1}\vct{y}$\\
$\vct{a}^{(0)} = 0$ \\
For $j = 0$ to $O(\log k)$ $\{$ \\
\hspace{2pc} $\vct{y} = \vct{y} - \mtx{P}^{-1}\mtx{\Phi}\vct{a}^{(j)}$ \\
\hspace{2pc} split $\vct{y}^{(j)} = \vct{w}^{(j)} \rds \vct{z}^{(j)}$ \\
\hspace{2pc} $\Lambda =$ \sc{Identify}$(\mtx{D}^{(j)}, \vct{w}^{(j)})$ \\
\hspace{2pc} $\vct{b}^{(j)} =$ \sc{Estimate}$(\mtx{E}^{(j)}, \vct{z}^{(j)}, \Lambda)$ \\
\hspace{2pc} $\vct{a}^{(j+1)} = \vct{a}^{(j)} + \vct{b}^{(j)}$ \\
$\}$ \\
$\widehat x = \vct{a}^{(j)}$ \\ \\
\hline \\
\centerline{{\sc Identify}$(\mtx{D}^{(j)}, \vct{w}^{(j)})$} \\
Output: $\Lambda =$ list of positions \\ \\
$\Lambda = \emptyset$ \\
Divide $\vct{w}^{(j)}$ into sections $[\vct{v},\vct{u}]$ of size $O(\log(c^j (N/k)))$ \\
For each section  $\{$ \\
\hspace{2pc} $u = {\rm median} (|\vct{v}_\ell|)$ \\
\hspace{2pc} For each $\ell$  \hfill // threshold measurements \\
\hspace{4pc} $\vct{u}_\ell = \Theta(\vct{u}_\ell - u/2)$ \hfill // $\Theta(u) = 1$ if $u > 0$, $\Theta(u) = 0$ otherwise \\
\hspace{2pc} Divide $\vct{u}$ into blocks $b_i$ of size $O(\log\log N)$\\
\hspace{2pc} For each $b_i$ \\
\hspace{4pc} $\beta_i$ = {\sc Decode}$(b_i)$ \hfill // using error-correcting code \\
\hspace{2pc} $\lambda = ${\sc Integer}$(\beta_1,\beta_2,\ldots)$
                 \hfill // integer rep'ed by bits $\beta_1,\beta_2,\ldots$ \\
\hspace{2pc} $\Lambda = \Lambda \cup \{\lambda\}$ \\
$\}$ \\ \\
\hline \\
\centerline{{\sc Estimate}$(\mtx{E}^{(j)}, \vct{z}^{(j)}, \Lambda)$} \\
Output: $\vct{b} =$ vector of positions and values \\ \\
$\vct{b} = \emptyset$ \\
For each $\lambda \in \Lambda$  \\
\hspace{2pc} $\vct{b}_\lambda = {\rm median}_{\ell {\rm ~s.t. } \mtx{B}_{\ell,\lambda}^{(j)}=1} (\vct{z}_\ell^{(j)} \mtx{S}_{\ell,\lambda}^{(j)})$ \\
\hline
\end{tabular}
\end{minipage} }
\caption{Pseudocode for the overall decoding algorithm.}
\label{fig:decodingalg}
\end{figure}
\end{center}


\end{document}